\documentclass[preprint,12pt]{article}
\usepackage{bbm}
\usepackage{amsfonts}
\usepackage{mathrsfs}
\usepackage{amssymb}
\usepackage{amsmath}
\usepackage{amsthm}
\usepackage{color}
\usepackage{url}
\usepackage{hyperref}
\usepackage{multirow}
\usepackage{bbding}
\usepackage[ruled,vlined]{algorithm2e}
\usepackage{latexsym}
\usepackage[final]{graphicx}
\usepackage{cases}
\usepackage{amsthm}

\usepackage{lastpage}
\usepackage{epsfig}
\usepackage[misc]{ifsym}
\usepackage{lineno}

\newtheorem{definition}{Definition}
\newtheorem{theorem}{Theorem}
\newtheorem{lemma}{Lemma}

\begin{document}

\begin{center}

\textbf{\large{Asymptotically Ideal Conjunctive Hierarchical Secret Sharing Scheme Based on CRT for Polynomial Ring}}

\vspace{3mm}
Jian Ding$^{1,2}$, Cheng Wang$^2$, Hongju Li$^{1,3}$, Cheng Shu$^2$, Haifeng Yu$^2$

\vspace{3mm}
\footnotesize{
$^1$School of Mathematics and Big Data, Chaohu University, Hefei 238024, China\\
$^2$School of Mathematics and Statistics, Hefei University, Hefei 230000, China\\
$^3$}School of Computer Science and Technology, University of Science and Technology of China, Hefei 230026, China\\

\footnotetext{\footnotesize {This research was in part supported by Projects of Chaohu University under Grant
KYQD-202220, Grant 2024cxtd147, Grant kj22zdjsxk01 and Grant hxkt20250327, University Natural Science Research Project of Anhui Province under Grant 2024AH051324 and Grant 2025AHGXZK30056.}}

\footnotetext{\footnotesize {\noindent Corresponding author: Jian Ding}}

\footnotetext{\footnotesize {E-mail address: dingjian\_happy@163.com (J. Ding)}}
\end{center}

\noindent\textbf{Abstract}:
Conjunctive Hierarchical Secret Sharing (CHSS) is a type of secret sharing that divides participants into multiple distinct hierarchical levels, with each level having a specific threshold. An authorized subset must simultaneously meet the threshold of all levels. Existing Chinese Remainder Theorem (CRT)-based CHSS schemes either have security vulnerabilities or have an information rate lower than $\frac{1}{2}$. In this work, we utilize the CRT for polynomial ring and one-way functions to construct an asymptotically perfect CHSS scheme. It has computational security, and permits flexible share sizes. Notably, when all shares are of equal size, our scheme is an asymptotically ideal CHSS scheme with an information rate one.

\noindent\textbf{Keywords}: secret sharing, conjunctive hierarchical secret sharing, asymptotically ideal secret sharing, Chinese Remainder Theorem, information rate

\section {Introduction}

Secret sharing (SS) schemes were independently introduced by Shamir \cite{Shamir1979} and Blakley \cite{Blakley1979} in 1979. A typical SS scheme consists of two phases: a share generation phase and a secret reconstruction phase. In the share generation phase, the dealer divides the secret into multiple shares and distributes them to the participants. During the secret reconstruction phase, any authorized set of participants can reconstruct the secret, while any unauthorized set cannot. The collection of all authorized subsets is known as the \emph{access structure}. An SS scheme is said to be \emph{perfect} if any given unauthorized subset learns no information about the secret. \emph{Information rate} is a key efficiency measure in SS schemes, defined as the ratio between the entropy of the secret and the maximum entropy among all shares. An SS scheme is termed \emph{ideal} if it is perfect and has an information rate one. It is referred to as \emph{asymptotically ideal} if it is asymptotically perfect and its information rate approaches 1 as the secret size increases.

The $(t,n)$-threshold scheme is a form of secret sharing in which any subset of participants of size at least $t$ is authorized to reconstruct the secret, whereas any subset with $(t-1)$ or fewer participants is unauthorized. In such schemes, all participants hold identical privileges during the secret reconstruction phase. However, many practical applications require assigning different privilege levels to participants. To meet this need, the \emph{Conjunctive Hierarchical Secret Sharing} (CHSS) scheme was introduced \cite{Tassa2007}. It organizes participants into multiple hierarchical levels, each with its own threshold. An authorized subset must simultaneously satisfy the thresholds of every level. Based on Birkhoff interpolation, Tassa \cite{Tassa2007} constructed an ideal and unconditionally secure CHSS scheme, though the dealer had to check the non-singularity of an exponential number of matrices. Gyarmati et al. \cite{2025Gyarmati} later reduced the size of the underlying finite field in Tassa's scheme using finite geometry.

Employing polymatroid theory, Chen et al. \cite{Chenqi2022} presented an ideal CHSS scheme with an explicit construction. It only requires the dealer to perform polynomial-time computations to determine certain parameters. Yuan et al. \cite{YuanJiaotong2022} proposed a multi-secret CHSS scheme based on homogeneous linear relations and one-way functions. The scheme achieved a smaller share size compared to Chen et al. \cite{Chenqi2022}, but demanded a large number of public values. More recently, Chintamani et al. \cite{Chintamani2024} designed another multi-secret CHSS scheme using elliptic curves and bilinear pairings, which supports an arbitrary number of participants and places no restriction on the number of secrets that can be shared. Because both Yuan et al. \cite{YuanJiaotong2022} and Chintamani et al. \cite{Chintamani2024} rely on one-way functions or elliptic curves, their schemes are asymptotically ideal when only one secret is shared.

Chinese Remainder Theorem (CRT)-based SS schemes offer a notable advantage: the flexibility to assign shares of varying sizes to different participants \cite{Asmuth-Bloom1983, Ning2018, Ding2023}. Consequently, this work focuses on CRT-based CHSS schemes. Ersoy et al. \cite{Oguzhan-Ersoy2016} constructed a CHSS scheme using the CRT for integer ring and employing one-way functions. However, their scheme suffers from a low information rate-specifically, less than $\frac{1}{2}$. Subsequently, Tiplea et al. \cite{Tiplea2021} proposed an asymptotically ideal CHSS scheme also based on the CRT for integer ring, whose information rate approaches 1. Nevertheless, this scheme has been proven insecure \cite{Hongjuarx}.

\emph{Our contributions}. By using the CRT for polynomial ring and one-way functions, we construct an asymptotically ideal CHSS scheme with an information rate one. It has flexible share sizes, but the ideal CHSS schemes of Tassa \cite{Tassa2007}, Gyarmati et al. \cite{2025Gyarmati}, Chen et al. \cite{Chenqi2022} and the asymptotically ideal CHSS schemes of Yuan et al. \cite{YuanJiaotong2022} and Chintamani et al. \cite{Chintamani2024} cannot. Compared with the CRT-based CHSS schemes of Tiplea et al. \cite{Tiplea2021} and Ersoy et al. \cite{Oguzhan-Ersoy2016}, our scheme is a secure and asymptotically ideal DHSS scheme (See Table 1).
     \begin{table}[!htb]
  \centering
   {\bf Table 1.}  Conjunctive hierarchical secret sharing schemes. \\
  \begin{tabular}{cccccc}
  \hline
    \multirow{2}*{Schemes}  & \multirow{2}*{Security}    & Flexible             &\multirow{2}*{Perfectness}  &Information\\
    ~                        &~                        & share sizes      &~                          &rate $\rho$\\
  \hline
  \hline
  \cite{Tassa2007}              &Unconditional                      &No                      &Yes                       &$\rho=1$\\
  \hline
  \cite{2025Gyarmati}           &Unconditional                       &No                      &Yes                       &$\rho=1$\\
  \hline
  \cite{Chenqi2022}             &Unconditional                       &No                      &Yes                       &$\rho=1$\\
  \hline
  \multirow{2}*{\cite{YuanJiaotong2022}}  &\multirow{2}*{Computational}   &\multirow{2}*{No}  & \multirow{2}*{Asymptotic}  &$\rho=1$ \\
  ~                             &~                                    &~                      &~                         &(one secret was shared)\\
  \hline
  \cite{Chintamani2024}        &\multirow{2}*{Computational}   &\multirow{2}*{No}  & \multirow{2}*{Asymptotic}  &$\rho=1$ \\
  ~                             &~                                    &~                      &~                         &(one secret was shared)\\
  \hline
   \cite{Tiplea2021}            &No                       &Yes                     &No                        &Approaches 1\\
  \hline
  \cite{Oguzhan-Ersoy2016}      &Computational                       &Yes                  &Asymptotic                   &$\rho<\frac{1}{2}$\\
  \hline
  Our scheme                    &Computational                       &Yes                   &Asymptotic                  &$\rho=1$\\
  \hline
  \end{tabular}
 \end{table}

\emph{Related work}. Recently, Yang et. \cite{Yangjing2024} constructed an ideal disjunctive hierarchical secret sharing scheme based on CRT for polynomial ring. It follows a similar approach to that of Tiplea et al. \cite{Tiplea2021}, and has been proven insecure \cite{Hongjuarx}.

\emph{Paper organization}. The remainder of this paper is structured as follows. Section \ref{Sec: Prelim} introduces the necessary preliminaries. In Section \ref{Sec: our scheme}, we present the construction of an asymptotically ideal CHSS scheme and provide its security analysis. Section \ref{Sec: conclusion} concludes our work.

\section{Preliminaries}\label{Sec: Prelim}
This section will introduce the Chinese Reminder Theorem (CRT) for polynomial ring  as well as related concepts of secret sharing. Let $n, N_1, N_2$ be positive integers such that $N_1<N_2$.  Denote by $[n]=\{1,2,\ldots,n\}$ and $[N_1, N_2]=\{N_1,N_1+1,\ldots,N_2\}$. Let $\mathsf{H}(\mathbf{X})$ be the Shannon entropy of the random variable $\mathbf{X}$, and let $\mathsf{H}(\mathbf{X}|\mathbf{Y})$ be the conditional entropy of $\mathbf{X}$ given $\mathbf{Y}$.

\subsection{CRT for polynomial ring}

\begin{lemma}[CRT for polynomial ring, \cite{Ning2018,Ding2023}]\label{le: CRT}
Let $\mathbb{F}$ be a finite field, and let $m_1(x),m_2(x),\\\ldots,m_n(x)\in \mathbb{F}[x]$ be pairwise coprime polynomials.
For any given polynomials $g_1(x),g_2(x),\\\ldots,g_n(x)\in \mathbb{F}[x]$, consider the system of congruences
                  \[ y(x)\equiv g_i(x) \pmod {m_i(x)},~ \mathrm{for~ all}~ i\in [n].\]
Then its solutions satisfies
          \[y(x)\equiv\sum\limits_{i=1}^{n}\lambda_i(x)M_i(x)g_i(x) \pmod {M(x)},\]
where
        \[M(x)=\prod\limits_{i=1}^{n}m_i(x), M_i(x)=M(x)/m_i(x),~\mathrm{and}~\lambda_i(x) M_i^{-1}(x)\equiv 1\pmod {m_i(x)}.\]
Moreover, if the degree of $y(x)$ satisfies $\deg(y(x))<\deg(M(x))$, the solution is unique and can be written as
          \[y(x)=\sum\limits_{i=1}^{n}\lambda_i(x)M_i(x)g_i(x) \pmod {M(x)}.\]
\end{lemma}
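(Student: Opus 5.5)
The plan is to prove the statement directly from the facts that $\mathbb{F}[x]$ is a Euclidean domain and that $\gcd(M_i(x),m_i(x))=1$. First I justify that the $\lambda_i(x)$ exist. Since the $m_j(x)$ are pairwise coprime, each $m_j(x)$ with $j\neq i$ is coprime to $m_i(x)$. Because coprimality to a fixed polynomial is preserved under products (an irreducible factor dividing a product divides one of the factors), $M_i(x)=\prod_{j\neq i}m_j(x)$ is coprime to $m_i(x)$. Bézout's identity in the Euclidean domain $\mathbb{F}[x]$ then gives $a_i(x),b_i(x)$ with $a_i(x)M_i(x)+b_i(x)m_i(x)=1$. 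Taking $\lambda_i(x)=a_i(x)$ yields $\lambda_i(x)M_i(x)\equiv 1 \pmod{m_i(x)}$, which is the meaning of $\lambda_i(x)M_i^{-1}(x)\equiv 1$ in the statement.

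Next I verify that $y_0(x)=\sum_{i=1}^n\lambda_i(x)M_i(x)g_i(x)$ solves the system. Fix $k\in[n]$. For $i\neq k$, the factor $m_k(x)$ divides $M_i(x)$, so those terms vanish modulo $m_k(x)$. The remaining term satisfies $\lambda_k(x)M_k(x)g_k(x)\equiv g_k(x)\pmod{m_k(x)}$. Hence $y_0(x)\equiv g_k(x)\pmod{m_k(x)}$ for all $k$.

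Then I show that every solution is congruent to $y_0(x)$ modulo $M(x)$, and conversely. If $y(x)$ is any solution, then each $m_k(x)$ divides $y(x)-y_0(x)$. The step needing care is passing from "each pairwise coprime $m_k$ divides $d$" to "$M$ divides $d$". I would prove this by induction on $n$: if $m_1\cdots m_{k-1}\mid d$ and $m_k\mid d$, then since $\gcd(m_1\cdots m_{k-1},m_k)=1$ (by the same argument as above), Bézout gives $M\mid d$ for the product of the first $k$ moduli. Conversely, anything congruent to $y_0(x)$ modulo $M(x)$ is congruent to $g_k(x)$ modulo each $m_k(x)$, because $m_k(x)\mid M(x)$. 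So the solution set is exactly the residue class of $y_0(x)$ modulo $M(x)$.

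Finally, uniqueness under $\deg y(x)<\deg M(x)$ follows from division with remainder. Each residue class modulo $M(x)$ contains exactly one polynomial of degree less than $\deg M(x)$, since the difference of two such representatives has degree less than $\deg M(x)$ and is divisible by $M(x)$, hence is zero. The solution of low degree is therefore the remainder of $y_0(x)$ upon division by $M(x)$, which is what the formula $y(x)=\sum\lambda_i(x)M_i(x)g_i(x)\bmod M(x)$ denotes. I expect no real obstacle; the only point requiring care is the coprimality-of-products and lcm step, which I handle by the induction above.
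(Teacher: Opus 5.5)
The paper does not prove this lemma. It is imported as the standard Chinese Remainder Theorem from the earlier CRT-based secret sharing literature and used as a black box. In particular, the correctness theorem and the preimage count rely on its uniqueness clause for solutions of degree below $\deg(M(x))$.

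Your argument is the usual Euclidean-domain proof, and it is correct and complete:
\begin{itemize}
\item coprimality of $M_i(x)$ with $m_i(x)$, together with B\'ezout, produces $\lambda_i(x)$;
\item the explicit sum is checked modulo each $m_k(x)$;
\item your induction gives $M(x)\mid y(x)-y_0(x)$ for any two solutions;
\item division with remainder isolates the unique solution of degree less than $\deg(M(x))$.
\end{itemize}
So you supply exactly what the paper takes on trust, using nothing beyond the Euclidean structure of $\mathbb{F}[x]$. You never use that $\mathbb{F}$ is finite, so the lemma holds over any field.

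One point you should make explicit rather than gloss over. The statement as printed requires $\lambda_i(x)M_i^{-1}(x)\equiv 1 \pmod{m_i(x)}$. Read literally, this means $\lambda_i(x)\equiv M_i(x)\pmod{m_i(x)}$, and the displayed formula would then be false in general: it would produce $M_i(x)^2g_i(x)$ rather than $g_i(x)$ modulo $m_i(x)$. The intended condition is $\lambda_i(x)M_i(x)\equiv 1$, i.e.\ $\lambda_i(x)\equiv M_i^{-1}(x)\pmod{m_i(x)}$, which is how the paper writes it everywhere else. You adopt the right condition, but you should say that you are correcting a typo in the statement, not that your congruence is ``the meaning'' of the printed one.
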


\subsection{Secret sharing}

\begin{definition}[Secret sharing scheme]
Let $\mathcal{P}=\{P_1, P_2, \ldots, P_n\}$ denote the set of all $n$ participants. A Secret Sharing (SS) scheme consists of two phases: a share generation phase and a secret reconstruction phase, described as follows.
\begin{itemize}
\item \textbf{Share Generation Phase.} The dealer distributes shares to the participants via the mapping
                      \[\mathsf{SHARE}\colon\mathcal{S}\times\mathcal{R}\mapsto\mathcal{C}_1\times\mathcal{C}_2\times\dotsb\times\mathcal{C}_n,\]
   where $\mathcal{S}$ is the secret space, $\mathcal{C}_i$ is the share space of participant $P_i$, and $\mathcal{R}$ is the set of random inputs.

\item \textbf{Secret Reconstruction Phase.} Any authorized subset $\mathcal{A} \subseteq \mathcal{P}$ can reconstruct the secret using their shares through the mapping
                     \[\mathsf{RECON}\colon\prod_{P_i\in I}\mathcal{C}_i\mapsto\mathcal{S},\]
  Conversely, any unauthorized subset should not be able to reconstruct the secret.
\end{itemize}
\end{definition}

In this work, we identify each number $i$ with the $i$-th participant $P_i$, which implies that $[n] = \mathcal{P}$. An SS scheme is said to be \emph{perfect} if any given unauthorized subset learns no informtation about the secret.

\begin{definition}[Information rate, \cite{Ning2018}]\label{def: Information rate}
Let $\mathbf{S}$ and $\mathbf{C}_i$ be random variables representing the secret and the share of the $i$-th participant, respectively. The information rate of an SS scheme is defined as
                                \[\rho=\frac{\mathsf{H}(\mathbf{S})}{\max_{i\in [n]}^{}{\mathsf{H}(\mathbf{C}_i)}}.\]
\end{definition}

For any perfect SS scheme, the information rate satisfies $\rho \leq 1$. The scheme is called ideal if it is perfect and attains an information rate one.

\begin{definition}[Conjunctive hierarchical secret sharing scheme, \cite{Tassa2007}]\label{def: Conjunctive secret sharing}
Let $\mathcal{P}$ be a set of $n$ participants, partitioned by the dealer into $m$ disjoint subsets $\mathcal{P}_1, \mathcal{P}_2, \ldots, \mathcal{P}_m$ such that
           \[\mathcal{P}=\cup_{\ell=1}^{m} \mathcal{P}_{\ell},~and~\mathcal{P}_{\ell_1}\cap\mathcal{P}_{\ell_2}=\varnothing~ for~any~ 1\leq \ell_1<\ell_2\leq m.\]
For a threshold sequence $t_1, t_2, \ldots, t_m$ such that $1\leq t_1< t_2<\cdots<t_m\leq n$, the Conjunctive Hierarchical Secret Sharing (CHSS) scheme with the given threshold sequence is an SS scheme such that the following correctness and privacy are satisfied.
\begin{itemize}
  \item \textbf{Correctness}. The secret can be reconstructed by any given element of
       \[\Gamma=\{\mathcal{A} \subseteq \mathcal{P}: |\mathcal{A}\cap(\bigcup_{w=1}^{\ell}\mathcal{P}_{w})|\geq t_{\ell} ~\mathrm{for~all}~\ell \in [m]\},\]
       where $|\cdot|$ is the cardinality of a set.
  \item \textbf{Privacy}. The secret cannot be reconstructed by any given $\mathcal{B}\notin \Gamma$.
\end{itemize}
A CHSS scheme is said to be ideal if it is perfect and the information rate is one.
\end{definition}

\begin{definition}[Asymptotically ideal CHSS scheme, \cite{Quisquater2002}]\label{def: Asymptotically Ideal CHSS}
A CHSS scheme with secret space $\mathcal{S}$ and share spaces $\mathcal{C}_i$ for $i \in [n]$ is said to be asymptotically ideal if it satisfies the following properties:
\begin{itemize}
\item \textbf{Asymptotic perfectness.} For every $\epsilon_1 > 0$, there exists a positive integer $\sigma_1$ such that for all $\mathcal{B} \notin \Gamma$ and $|\mathcal{S}| > \sigma_1$,
                       \[\Delta(|\mathcal{S}|)=\mathsf{H}(\mathbf{S})-\mathsf{H}(\mathbf{S}|\mathbf{V}_{\mathcal{B}})\leq \epsilon_1,\]
   where $\mathsf{H}(\mathbf{S}) \neq 0$, and $\mathbf{S}$, $\mathbf{V}_{\mathcal{B}}$ are random variables representing the secret and the collection of the shares of $\mathcal{B}$, respectively.

\item \textbf{Asymptotic maximum information rate.} For every $\epsilon_2 > 0$, there exists a positive integer $\sigma_2$ such that for all $|\mathcal{S}| > \sigma_2$,
      \[\frac{\max_{i\in [n]}^{}{\mathsf{H}(\mathbf{S}_i)}}{\mathsf{H}(\mathbf{S})}\leq 1+\epsilon_2.\]
\end{itemize}
\end{definition}


\section{An asymptotically ideal CHSS scheme}\label{Sec: our scheme}
In this section, we present an asymptotically ideal CHSS scheme constructed by applying the Chinese Remainder Theorem (CRT) for polynomial ring along with one-way functions. The detailed scheme is described in Subsection \ref{Subsec:our DHSS}, and its security analysis is provided in Subsection \ref{Subsec: security of our DHSS}.

\subsection{Our scheme}\label{Subsec:our DHSS}

Let $\mathcal{P}$ be a set of $n$ participants partitioned into $m$ disjoint subsets $\mathcal{P}_1, \mathcal{P}_2, \ldots, \mathcal{P}_m$. For each $\ell \in [m]$, denote by $n_{\ell}= |\mathcal{P}_{\ell}|$ and $N_{\ell} = \sum_{w=1}^{\ell} n_w$. Let $t_1, t_2, \ldots, t_m$ be a strictly increasing threshold sequence satisfying
              \[1\leq t_1<t_2<\cdots<t_m,~\mathrm{and}~t_{\ell}\leq N_{\ell}~\mathrm{for}~\ell\in [m].\]
Let $\lfloor x \rfloor$ be the largest integer no greater than $x$. Our scheme comprises two phases: a share generation phase and a secret reconstruction phase.

\textbf{1) Share Generation Phase}. Let $p$ be a prime integer, and $\mathbb{F}_p$ be a finite field with $p$ elements. Denote by $\mathcal{S}=\{g(x)\in \mathbb{F}_p[x]:\deg(g(x))<d_0\}$ the secret space, where $d_0$ is a publicly known positive integer.
\begin{itemize}
      \item \emph{Identities}. The dealer sets $m_0(x) = x^{d_0}$, and selects a set of publicly known polynomials $m_i(x) \in \mathbb{F}_p[x], i \in [n]$ that satisfy the following three conditions:
          \begin{itemize}
              \item[(i)] $m_0(x), m_1(x),\ldots, m_n(x)$ are pairwise coprime polynomials over $\mathbb{F}_p$.
             \item[(ii)] Let $d_i = \deg(m_i(x)), i \in [n]$, then $d_0 \leq d_1 \leq d_2 \leq \cdots \leq d_n$.
             \item[(iii)] $d_0+\sum\limits_{i=n-t_{\ell}+2}^{n}d_i\leq \sum\limits_{i=1}^{t_{\ell}}d_i$ for all $\ell\in [m]$.
           \end{itemize}
   \item \emph{Shares}. For any $s(x)\in \mathcal{S}$, the dealer chooses random polynomials $s_{1}(x),s_{2}(x),\ldots,\\s_{m}(x) \in \mathcal{S}$ such that $s(x)=\sum_{\ell=1}^{m}s_{\ell}(x)\in \mathbb{F}_p[x]$. After that, the dealer selects random polynomials
             \[\alpha_{\ell}(x)\in \mathcal{G}_{\ell} =\{g(x)\in \mathbb{F}_{p}[x]:\deg(g(x))<(\sum\limits_{i=1}^{t_{\ell}}d_i)-d_0\}, \ell\in [m],\]
        and constructs polynomials $f_{\ell}(x)=s_{\ell}(x)+\alpha_{\ell}(x)x^{d_0}, \ell\in [m]$. Finally, the dealer selects random polynomials
                 \[r_{i}(x)= r_{i,0}+r_{i,1}x+\cdots+r_{i,d_i-1}x^{d_i-1}\in \mathbb{F}_{p}[x], i\in [N_{m-1}],\]
         and sends the share $c_i(x)$ to the $i$-th participant, where
         \[c_i(x)=\begin{cases}
            r_i(x),\mathrm{if}~i\in [N_{m-1}],\\
            f_m(x) \pmod {m_i(x)},\mathrm{if}~i\in [N_{m-1}+1,N_m].\\
         \end{cases}\]

        \item \emph{Hierarchy}. The dealer chooses $m$ distinct one-way functions $h_1(\cdot), h_2(\cdot), \dots, h_m(\cdot)$ and makes them public. Each function accepts an input of arbitrary length and produces an output of fixed length $\lfloor \log_{2} p \rfloor$. Then the dealer publishes the values
                  \[u_i^{(\ell)}(x)=(f_{\ell}(x)-H_{\ell}(c_i(x)))\pmod{m_i(x)}, \ell\in [m-1], i\in [N_{\ell}],\]
            and $u_i^{(m)}(x)=(f_{m}(x)-H_{m}(c_i(x)))\pmod{m_i(x)}$, $i\in [N_{m-1}]$, where
            \[H_{\ell}(c_i(x))=H_{\ell}(r_i(x))=h_{\ell}(r_{i,0})+h_{\ell}(r_{i,1})x+\cdots+h_{\ell}(r_{i,d_i-1})x^{d_i-1}\in \mathbb{F}_p[x], \ell \in [m].\]
\end{itemize}

\textbf{2) Secret Reconstruction Phase}. For any
      \[\mathcal{A}\in\Gamma=\{\mathcal{A} \subseteq \mathcal{P}: |\mathcal{A}\cap(\bigcup_{w=1}^{\ell}\mathcal{P}_{w})|\geq t_{\ell} ~\mathrm{for~all}~\ell \in [m]\},\]
and $\ell\in [m]$, participants of $\mathcal{A}^{(\ell)}=\mathcal{A}\cap (\cup_{w=1}^{\ell}\mathcal{P}_{w})$ determine
       \[f_{\ell}(x)=\sum\limits_{i\in \mathcal{A}^{(\ell)}}\lambda_{i,\mathcal{A}^{(\ell)}}(x)M_{i,\mathcal{A}^{(\ell)}}(x)c_i^{(\ell)}(x) \pmod {M_{\mathcal{A}^{(\ell)}}(x)},\]
where $M_{\mathcal{A}^{(\ell)}}(x)=\prod\limits_{i\in\mathcal{A}^{(\ell)}}m_i(x)$, $M_{i,\mathcal{A}^{(\ell)}}(x)=M_{\mathcal{A}^{(\ell)}}(x)/m_i(x)$, $\lambda_{i,\mathcal{A}^{(\ell)}}(x)\equiv M_{i,\mathcal{A}^{(\ell)}}^{-1}(x) \pmod {m_i(x)}$, and
        \[c_i^{(\ell)}(x)=\begin{cases}
            H_{\ell}(c_i(x))+u_i^{(\ell)}(x),~\mathrm{if}~\ell\in [m-1], i\in \mathcal{A}^{(\ell)}\subseteq [N_{\ell}],\\
            H_{m}(c_i(x))+u_i^{(m)}(x),~\mathrm{if}~\ell=m, i\in \mathcal{A}^{(\ell)}\cap [N_{m-1}],\\
            c_i(x),~\mathrm{if}~\ell=m, i\in \mathcal{A}^{(\ell)}\cap [N_{m-1}+1,N_m].\\
         \end{cases}\]
Consequently, the secret can be reconstructed by using
        \[s(x)=\sum_{\ell=1}^{m}s_{\ell}(x)\pmod{p}=\sum_{\ell=1}^{m}f_{\ell}(x)\pmod{m_0(x)}.\]

\subsection{Security analysis of our scheme}\label{Subsec: security of our DHSS}

This subsection is devoted to proving the correctness, asymptotic perfectness, and asymptotic maximum information rate of our scheme.

\begin{theorem}[Correctness]\label{Theorem:Correctness of our scheme} The secret can be reconstructed by any
       \[\mathcal{A}\in\Gamma=\{\mathcal{A} \subseteq \mathcal{P}: |\mathcal{A}\cap(\bigcup_{w=1}^{\ell}\mathcal{P}_{w})|\geq t_{\ell} ~\mathrm{for~all}~\ell \in [m]\}.\]
\end{theorem}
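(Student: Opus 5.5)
Fix $\mathcal{A}\in\Gamma$ and $\ell\in[m]$. The plan is to show that the members of $\mathcal{A}^{(\ell)}$ recover $f_\ell(x)$ exactly. The secret then follows by summing and reducing modulo $m_0(x)=x^{d_0}$.

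\textbf{Step 1: each $c_i^{(\ell)}$ is the correct residue.} We check that $c_i^{(\ell)}(x)\equiv f_\ell(x)\pmod{m_i(x)}$ for every $i\in\mathcal{A}^{(\ell)}$. This is a direct case analysis on the definition of $c_i^{(\ell)}$.
\begin{itemize}
\item If $\ell\in[m-1]$, then $i\in[N_\ell]$, so the public value $u_i^{(\ell)}(x)$ exists. Participant $i$ computes $H_\ell(c_i(x))$ from its own share. Adding, $H_\ell(c_i)+u_i^{(\ell)}\equiv H_\ell(c_i)+f_\ell-H_\ell(c_i)\equiv f_\ell \pmod{m_i}$.
\item If $\ell=m$ and $i\le N_{m-1}$, the same argument applies with $u_i^{(m)}$.
\item If $\ell=m$ and $i>N_{m-1}$, then $c_i=f_m \bmod m_i$ by definition.
\end{itemize}
Since the $m_i$ are pairwise coprime, Lemma~\ref{le: CRT} shows that the given CRT combination is congruent to $f_\ell$ modulo $M_{\mathcal{A}^{(\ell)}}(x)$.

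\textbf{Step 2: the CRT output equals $f_\ell$ itself (degree bound).} Uniqueness in Lemma~\ref{le: CRT} requires $\deg f_\ell<\deg M_{\mathcal{A}^{(\ell)}}$.
\begin{itemize}
\item Since $f_\ell=s_\ell+\alpha_\ell x^{d_0}$ with $\deg s_\ell<d_0$ and $\deg\alpha_\ell<\sum_{i=1}^{t_\ell}d_i-d_0$, we get $\deg f_\ell<\sum_{i=1}^{t_\ell}d_i$.
\item Because $\mathcal{A}\in\Gamma$, we have $|\mathcal{A}^{(\ell)}|\ge t_\ell$. By condition (ii) the degrees are nondecreasing, so any $t_\ell$ indices have degree sum at least the sum of the $t_\ell$ smallest, namely $\sum_{i=1}^{t_\ell}d_i$.
\item Hence $\deg M_{\mathcal{A}^{(\ell)}}\ge\sum_{i=1}^{t_\ell}d_i>\deg f_\ell$.
\end{itemize}
The unique reduced solution is therefore exactly $f_\ell(x)$. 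This degree comparison is the only step with real content, and it needs just the ordering in (ii); condition (iii) is for privacy, not correctness.

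\textbf{Step 3: recover the secret.} We have $f_\ell\equiv s_\ell\pmod{x^{d_0}}$, and each $\deg s_\ell<d_0$, so $\deg\sum_\ell s_\ell<d_0$. Therefore $\sum_{\ell=1}^m f_\ell \bmod m_0(x)=\sum_\ell s_\ell=s(x)$, which proves the theorem.
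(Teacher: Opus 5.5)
Your proof is correct and follows the paper's argument. Both set up the congruences $f_\ell(x)\equiv c_i^{(\ell)}(x)\pmod{m_i(x)}$ for $i\in\mathcal{A}^{(\ell)}$, use $\deg f_\ell<\sum_{i=1}^{t_\ell}d_i\le\sum_{i\in\mathcal{A}^{(\ell)}}d_i$ to get uniqueness from Lemma~\ref{le: CRT}, and recover $s(x)$ by summing the $f_\ell$ modulo $m_0(x)$; you are only more explicit than the paper in checking the residues $c_i^{(\ell)}$ and in justifying the degree inequality from the ordering in condition (ii).
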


\begin{proof}
For any $\mathcal{A}\in\Gamma$ and $\ell\in [m]$, it holds that $|\mathcal{A}^{(\ell)}|=|\mathcal{A}\cap (\cup_{w=1}^{\ell}\mathcal{P}_{w})|\geq t_{\ell}$, and participants of $\mathcal{A}^{(\ell)}$ are able to compute
         \[c_i^{(\ell)}(x)=\begin{cases}
            H_{\ell}(c_i(x))+u_i^{(\ell)}(x),~\mathrm{if}~\ell\in [m-1], i\in \mathcal{A}^{(\ell)}\subseteq [N_{\ell}],\\
            H_{m}(c_i(x))+u_i^{(m)}(x),~\mathrm{if}~\ell=m, i\in \mathcal{A}^{(\ell)}\cap [N_{m-1}],\\
            c_i(x),~\mathrm{if}~\ell=m, i\in \mathcal{A}^{(\ell)}\cap [N_{m-1}+1,N_m],\\
         \end{cases}\]
where $\{c_i(x): i\in \mathcal{A}^{(\ell)}\}$ is the set of shares of $\mathcal{A}^{(\ell)}$, $\{u_i^{(\ell)}(x): \ell\in [m-1], i\in \mathcal{A}^{(\ell)}\subseteq [N_{\ell}]~\mathrm{or}~\ell=m, i\in \mathcal{A}^{(\ell)}\cap [N_{m-1}]\}$ is a set of publicly known polynomials, and $h_1(\cdot), h_2(\cdot), \dots,\\h_m(\cdot)$ are publicly known one-way functions. Therefore, participants of $\mathcal{A}^{(\ell)}$ establish a system of congruences
                 \begin{equation}\label{Eq:1}
                 f_{\ell}(x)\equiv c_{i}^{(\ell)}(x) \pmod{m_{i}(x)}, i\in \mathcal{A}^{(\ell)}.
                 \end{equation}

Recall that $d_0\leq d_1\leq d_2\leq\cdots\leq d_n$, $s(x)\in \mathcal{S}, \alpha_{\ell}(x)\in \mathcal{G}_{\ell}$, $f_{\ell}(x)=s_{\ell}(x)+\alpha_{\ell}(x)x^{d_0}$ and $|\mathcal{A}^{(\ell)}|\geq t_{\ell}$ for $\ell\in [m]$, the degree of the polynomial $f_{\ell}(x)$ satisfies
                \[\deg{(f_{\ell}(x))}<\sum\limits_{i=1}^{t_{\ell}}d_i\leq\sum\limits_{i\in \mathcal{A}^{(\ell)}}d_i.\]
According to the CRT for polynomial ring in Lemma \ref{le: CRT}, the system of congruences (\ref{Eq:1}) has a unique solution
  \begin{equation}\label{Eq:2}
   f_{\ell}(x)=\sum\limits_{i\in \mathcal{A}^{(\ell)}}\lambda_{i,\mathcal{A}^{(\ell)}}(x)M_{i,\mathcal{A}^{(\ell)}}(x)c_i^{(\ell)}(x) \pmod {M_{\mathcal{A}^{(\ell)}}(x)},
  \end{equation}
where $M_{\mathcal{A}^{(\ell)}}(x)=\prod\limits_{i\in\mathcal{A}^{(\ell)}}m_i(x)$, $M_{i,\mathcal{A}^{(\ell)}}(x)=M_{\mathcal{A}^{(\ell)}}(x)/m_i(x)$, and $\lambda_{i,\mathcal{A}^{(\ell)}}(x)\equiv M_{i,\mathcal{A}^{(\ell)}}^{-1}(x)\pmod {m_i(x)}$. Consequently, the secret can be reconstructed by using
        \[s(x)=\sum_{\ell=1}^{m}s_{\ell}(x)\pmod{p}=\sum_{\ell=1}^{m}f_{\ell}(x)\pmod{m_0(x)}.\]
\end{proof}

We now proceed to prove the \emph{asymptotic perfectness} of our scheme. Note that for each $\ell \in [m]$, the polynomial $f_{\ell}(x)$ is defined as $f_{\ell}(x)=s_{\ell}(x)+\alpha_{\ell}(x)x^{d_0}$, where $s_{\ell}(x) \in \mathcal{S} = \{g(x) \in \mathbb{F}_p[x] : \deg(g(x)) < d_0\}$ and $\alpha_{\ell}(x) \in \mathcal{G}_{\ell} = \{ g(x) \in \mathbb{F}_p[x] : \deg(g(x)) < (\sum_{i=1}^{t_{\ell}} d_i) - d_0 \}$. It follows that when $s_{\ell}(x)$ and $\alpha_{\ell}(x)$ are chosen uniformly at random from $\mathcal{S}$ and $\mathcal{G}_{\ell}$ respectively, the polynomial $f_{\ell}(x)$ is uniformly distributed over the set $\{f_{\ell}(x)\in \mathbb{F}_p[x]:\deg(f_{\ell}(x))<\sum\limits_{i=1}^{t_{\ell}}d_i\}$.

Let $\mathcal{B}$ denote the set of adversaries satisfying
        \[\mathcal{B}\subset \mathcal{P},~\mathrm{and}~\mathcal{B} \notin \Gamma=\{\mathcal{A} \subseteq \mathcal{P}: |\mathcal{A}\cap(\bigcup_{w=1}^{\ell}\mathcal{P}_{w})|\geq t_{\ell}~\mathrm{for~all}~ \ell \in [m]\}.\]
Define $\mathcal{B}^{(\ell)} = \mathcal{B} \cap (\bigcup_{w=1}^{\ell} \mathcal{P}_{w}), \ell\in [m]$. We consider the worst-case scenario, namely,
        \[|\mathcal{B}^{(m)}|< t_{m},~\mathrm{and}~|\mathcal{B}^{(\ell)}|\geq t_{\ell}~\mathrm{for~all}~\ell\in [m-1].\]
The adversaries in $\mathcal{B}$ know their shares, the upper bounds of the degrees of $f_{\ell}(x)\in \mathbb{F}_p[x], \ell\in [m]$, and all publicly known polynomials and one-way hash functions. Namely, the knowledge of $\mathcal{B}$ consists of the following five conditions (i)-(v).
\begin{itemize}
  \item[(i)] For every $\ell \in [m]$, $\deg(f_{\ell}(x)) < \sum_{i=1}^{t_{\ell}} d_i$.
  \item[(ii)] $s(x) \equiv \sum_{\ell=1}^{m} s_{\ell}(x) \pmod{p}$, and $s_{\ell}(x) \equiv f_{\ell}(x) \pmod{m_0(x)}$ for all $\ell \in [m]$.
  \item[(iii)] It holds that
                  \[f_{\ell}(x)\equiv(H_{\ell}(c_i(x))+u_i^{(\ell)}(x))\pmod{m_i(x)}, \ell\in [m-1], i\in \mathcal{B}^{(\ell)},\]
            and $f_{m}(x)\equiv(H_{m}(c_i(x))+u_i^{(m)}(x))\pmod{m_i(x)}$, $i\in \mathcal{B}^{(m-1)}$, i.e.,
             \begin{equation}\label{Eq:3}
         \left\{\begin{aligned}
            f_{1}(x)\equiv &(H_{1}(c_i(x))+u_i^{(1)}(x))\pmod{m_i(x)}~\mathrm{for~all}~i\in \mathcal{B}^{(1)},\\
            f_{2}(x)\equiv &(H_{2}(c_i(x))+u_i^{(2)}(x))\pmod{m_i(x)}~\mathrm{for~all}~i\in \mathcal{B}^{(2)},\\
                     &\vdots\\
            f_{m-1}(x)\equiv &(H_{m-1}(c_i(x))+u_i^{(m-1)}(x))\pmod{m_i(x)}~\mathrm{for~all}~i\in \mathcal{B}^{(m-1)},\\
            f_{m}(x)\equiv &(H_{m}(c_i(x))+u_i^{(m)}(x))\pmod{m_i(x)}~\mathrm{for~all}~i\in \mathcal{B}^{(m-1)}.\\
           \end{aligned} \right.
        \end{equation}
  \item[(iv)] For every $i\in \mathcal{B}\cap [N_{m-1}+1,N_m]$, it holds that $f_{m}(x)\equiv c_i(x)\pmod{m_i(x)}$.
  \item[(v)] For every $i\in [N_{m-1}]$ and $i\notin \mathcal{B}$, there is a level $\mathcal{P}_{\ell_1}$ such that $\ell_1\in [m-1]$ and $i\in \mathcal{P}_{\ell_1}$. There are polynomials $\widetilde{c}_i(x)\in \mathbb{F}_p[x]$ such that $\deg{(\widetilde{c}_i(x))}<d_i$, and $H_{\ell}(\widetilde{c}_i(x))=(f_{\ell}(x)-u_i^{(\ell)}(x))\pmod{m_i(x)}$ for all $\ell\in [\ell_1,m]$.
  \end{itemize}
Note that conditions (iii) and (v) ensure that the polynomials $f_1(x), f_2(x), \ldots, f_m(x) \in \mathbb{F}_p[x]$ satisfy publicly known polynomials $u_i^{(\ell)}(x)$.

Recall that $|\mathcal{B}^{(\ell)}|\geq t_{\ell}$ for all $\ell\in [m-1]$. From the expression (\ref{Eq:2}) and the system of congruences (\ref{Eq:3}), participants of $\mathcal{B}$ determine the polynomials $f_{\ell}(x)\in \mathbb{F}_p[x], \ell\in [m-1]$, and get $s_{\ell}(x)=f_{\ell}(x) \pmod{m_0(x)}, \ell \in [m-1]$. Therefore, the dealer will guess the secret by first selecting $\widetilde{f}_m(x)\in (\mathbb{F}_p[x]$ satisfying the following four conditions, and then computing $(\widetilde{f}_m(x)+\sum_{\ell=1}^{m-1} s_{\ell}(x))\pmod{m_0(x)}$.
 \begin{itemize}
   \item[(I)] Polynomials $f_{\ell}(x), s_{\ell}(x), \ell\in [m-1]$ are known by the adversaries in $\mathcal{B}$. Moreover, it holds that $\deg(\widetilde{f}_{m}(x)) < \sum_{i=1}^{t_{m}} d_i$
  \item[(II)] $\widetilde{f}_{m}(x)\equiv (H_{m}(c_i(x))+u_i^{(m)}(x))\pmod{m_i(x)}$ for all $i\in \mathcal{B}^{(m-1)}$.
  \item[(III)] For every $i\in \mathcal{B}\cap [N_{m-1}+1,N_m]$, it holds that $\widetilde{f}_{m}(x)\equiv c_i(x)\pmod{m_i(x)}$.
  \item[(IV)] For every $i\in [N_{m-1}]$ and $i\notin \mathcal{B}$, there is a level $\mathcal{P}_{\ell_1}$ such that $\ell_1\in [m-1]$ and $i\in \mathcal{P}_{\ell_1}$. There are polynomials $\widetilde{c}_i(x)\in \mathbb{F}_p[x]$ such that $\deg{(\widetilde{c}_i(x))}<d_i$, $H_{\ell}(\widetilde{c}_i(x))=(f_{\ell}(x)-u_i^{(\ell)}(x))\pmod{m_i(x)}$ for all $\ell\in [\ell_1,m-1]$, and $H_{m}(\widetilde{c}_i(x))=(\widetilde{f}_{m}(x)-u_i^{(m)}(x))\pmod{m_i(x)}$.
  \end{itemize}

\begin{lemma}\label{le: loss entropy}
Let $\mathcal{V}_{\mathcal{B}}$ and $\mathcal{V}^\prime_{\mathcal{B}}$ be the conditions (I) to (IV) and the conditions (I) to (III), respectively. For all $\epsilon_1>0$, there is a positive integer $\sigma_1$ such that for all $|\mathcal{S}|=p^{d_0}>\sigma_1$, it has that
                               \[0<\mathsf{H}(\mathbf{S}|\mathbf{V}^\prime_{\mathcal{B}})-\mathsf{H}(\mathbf{S}|\mathbf{V}_{\mathcal{B}})<\epsilon_1,\]
where $\mathbf{V}_{\mathcal{B}}$ and $\mathbf{V}^\prime_{\mathcal{B}}$ are random variables representing $\mathcal{V}_{\mathcal{B}}$ and $\mathcal{V}^\prime_{\mathcal{B}}$, respectively.
\end{lemma}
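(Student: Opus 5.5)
We will compare the two conditional distributions of $\mathbf{S}$ by counting the candidate polynomials $\widetilde{f}_m(x)$ admitted by each set of conditions. The secret is then read off through the map $\widetilde{f}_m(x)\mapsto(\widetilde{f}_m(x)+\sum_{\ell=1}^{m-1}s_\ell(x))\pmod{m_0(x)}$.

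\textbf{Step 1: the solution set under $\mathcal{V}'_{\mathcal{B}}$.} Under conditions (I)--(III), the admissible $\widetilde{f}_m(x)$ form an affine subspace of $\{g\in\mathbb{F}_p[x]:\deg g<\sum_{i=1}^{t_m}d_i\}$. This subspace is cut out by the congruences modulo $M'(x)=\prod_{i\in\mathcal{B}^{(m)}}m_i(x)$. Since $|\mathcal{B}^{(m)}|\le t_m-1$ and $d_1\le\cdots\le d_n$, we have $\deg M'(x)\le\sum_{i=n-t_m+2}^{n}d_i$. Condition (iii) then gives $\deg M'(x)+d_0\le\sum_{i=1}^{t_m}d_i$.

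By Lemma \ref{le: CRT} applied to the coprime moduli $m_0(x)$ and $m_i(x)$ for $i\in\mathcal{B}^{(m)}$, every residue of $\widetilde{f}_m(x)$ modulo $m_0(x)=x^{d_0}$ is attained by exactly $p^{\sum_{i=1}^{t_m}d_i-d_0-\deg M'}$ solutions. Under $\mathcal{V}'_{\mathcal{B}}$ the secret is therefore uniform on $\mathcal{S}$, and $\mathsf{H}(\mathbf{S}|\mathbf{V}'_{\mathcal{B}})=\mathsf{H}(\mathbf{S})=d_0\log p$.

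\textbf{Step 2: the effect of condition (IV).} Condition (IV) adds, for each non-adversary $i\in[N_{m-1}]$, the requirement that the residue $\widetilde{f}_m(x)-u_i^{(m)}(x)\pmod{m_i(x)}$ lie in the image of $H_m$. This image must also be compatible with preimages $\widetilde c_i$ consistent with the already determined $f_\ell$ for $\ell<m$. We will model $h_m$ as a random function, which is the standard heuristic for one-way functions. Under this model, a fixed candidate $\widetilde{f}_m$ survives condition (IV) with some probability $q_i$ determined by the image sizes. For example, for $\ell_1=m-1$ a preimage $\widetilde c_i$ of the known value $H_{m-1}(\widetilde c_i)$ is a near-random element, so the $H_m$ residue is essentially fresh.

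The key point is that this acceptance probability is essentially independent of the residue class of $\widetilde{f}_m$ modulo $x^{d_0}$. We will then bound the resulting deviation from uniformity. Each residue class of the secret contains $K=p^{\sum d_i-d_0-\deg M'}$ candidates, and the surviving count in each class is a sum of nearly independent Bernoulli trials. A Chernoff or Chebyshev bound shows that the counts across the $p^{d_0}$ classes are within a factor $1\pm\delta$ of each other. Here $\delta\to 0$ as $p\to\infty$, and the true secret's class additionally has at least one survivor.

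\textbf{Step 3: converting to entropy.} A distribution on $\mathcal{S}$ whose probabilities all lie in $[(1-\delta)/|\mathcal{S}|,(1+\delta)/|\mathcal{S}|]$ has entropy at least $\log|\mathcal{S}|-O(\delta)$. Combined with Step 1, this gives $\mathsf{H}(\mathbf{S}|\mathbf{V}'_{\mathcal{B}})-\mathsf{H}(\mathbf{S}|\mathbf{V}_{\mathcal{B}})\le O(\delta)<\epsilon_1$ once $p^{d_0}>\sigma_1$. For strict positivity, we will note that (IV) genuinely excludes some residues: the image of $H_m$ is a proper subset whenever $\lfloor\log_2 p\rfloor$-bit outputs collide. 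This gives a nonzero information gain.

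\textbf{Main obstacle.} The hard step is Step 2, namely proving near-uniformity of the surviving counts rigorously. It depends on the random-oracle idealization of the $h_\ell$ and on showing that $K$ grows with $p$. When $\deg M'$ equals $\sum_{i=1}^{t_m}d_i-d_0$ exactly, $K=1$ and concentration fails. In that case I would first try to use the multiplicity of preimages $\widetilde c_i$, of which there are $p^{d_i}$ candidates, to obtain the averaging, arguing that each $H_m$-residue is hit with probability $1-(1-p^{-1})^{\Theta(p)}$-type quantities that are uniform in the candidate.
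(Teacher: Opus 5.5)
Your Step 1 is correct. It coincides with what the paper does in Lemmas \ref{le: preimage} and \ref{le: cardinality of F} and in the entropy computation of Theorem \ref{Th: our perfectness}.

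The gap is Step 2, and a sharper concentration bound cannot fix it: the premise that an honest participant's $H_m$-residue is ``essentially fresh'' is false. In the worst case $\mathcal{B}$ knows $f_{m-1}(x)$. So for every honest $i\in[N_{m-1}]$ it knows $H_{m-1}(c_i(x))=(f_{m-1}(x)-u_i^{(m-1)}(x))\bmod m_i(x)$ exactly. Since $H_{m-1}$ applies $h_{m-1}$ to single elements of $\mathbb{F}_p$, each $r_{i,j}$ is confined to the fiber $h_{m-1}^{-1}(h_{m-1}(r_{i,j}))$. In your own random-function model this fiber has expected size $1+(p-1)/2^{\lfloor\log_2 p\rfloor}<3$. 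Hence condition (IV) confines $\widetilde f_m(x)\bmod m_i(x)=H_m(\widetilde c_i(x))+u_i^{(m)}(x)$ to at most $\prod_j|h_{m-1}^{-1}(h_{m-1}(r_{i,j}))|$ residues, typically $O(1)^{d_i}$ out of $p^{d_i}$. The consequences are:
\begin{itemize}
\item a false candidate survives with probability of order $p^{-d_i}$ per honest lower-level participant, not $1-\mathrm{negl}$;
\item the expected number of false survivors per class, $K\prod_i q_i$, is typically far below $1$, so the guaranteed survivor in the true class dominates the posterior;
\item your $K=1$ fallback fails for the same reason: there are only $O(1)^{d_i}$ admissible $\widetilde c_i(x)$, not $p^{d_i}$, so each coefficient of the $H_m$-residue is hit with probability $O(1/p)$.
\end{itemize}
One-wayness cannot help. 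Shannon entropy measures what an unbounded observer learns, and exhaustive search over $\mathbb{F}_p$ inverts $h_{m-1}$; this is even cheap when $p$ is fixed and $d_0\to\infty$.

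In fact the upper bound in the statement fails. Take $m=2$, $\mathcal{P}_1=\{1,2\}$, $\mathcal{P}_2=\{3\}$, $t_1=1$, $t_2=2$, $d_0=d_1=d_2=d_3=d$ (so conditions (i)--(iii) hold), and $\mathcal{B}=\{1\}$. The view $\mathbf{V}_{\mathcal{B}}$ together with $r_2(x)$ determines $f_1(x)$, $f_2(x)\bmod m_1(x)$ and $f_2(x)\bmod m_2(x)$, hence $f_2(x)$ and $s(x)$. Meanwhile $\mathbf{V}_{\mathcal{B}}$ alone determines $H_1(r_2(x))=(f_1(x)-u_2^{(1)}(x))\bmod m_2(x)$. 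Therefore
\[\mathsf{H}(\mathbf{S}|\mathbf{V}_{\mathcal{B}})\le\mathsf{H}(r_2(x)\mid H_1(r_2(x)))=d\,\bigl(\log_2 p-\mathsf{H}(h_1(\mathbf{r}))\bigr),\]
with $\mathbf{r}$ uniform on $\mathbb{F}_p$. By your Step 1, $\mathsf{H}(\mathbf{S}|\mathbf{V}'_{\mathcal{B}})=d\log_2 p$. So the difference is at least $d\,\mathsf{H}(h_1(\mathbf{r}))$, which is unbounded unless $h_1$ is essentially constant on $\mathbb{F}_p$.

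The paper's own proof only asserts that (IV) eliminates a candidate with negligible probability. That is exactly what your Step 2 tries to justify, and it is unsupported for the same reason, so following the paper will not close the gap.

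The strict lower bound also fails in general. If $[N_{m-1}]\subseteq\mathcal{B}$ (e.g.\ $\mathcal{P}_1=\{1\}$, $\mathcal{P}_2=\{2,3\}$, $t_1=1$, $t_2=2$, $\mathcal{B}=\{1\}$), condition (IV) is vacuous and the difference is $0$. Only $\ge 0$ is automatic, since $\mathbf{V}_{\mathcal{B}}$ refines $\mathbf{V}'_{\mathcal{B}}$.
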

\begin{proof}
Since the shares $c_i(x), i\in [N_{m-1}]$ are randomly selected by the dealer and functions $h_{\ell}(\cdot), \ell\in [m]$ are distinct one-way functions, then the condition (IV) can eliminate a polynomial $\widetilde{f}_m(x)\in \mathbb{F}_p[x]$ with a negligible probability when $|\mathcal{S}|$ is big enough. This shows that $\mathsf{H}(\mathbf{S}|\mathbf{V}^\prime_{\mathcal{B}})-\mathsf{H}(\mathbf{S}|\mathbf{V}_{\mathcal{B}})$ is negligible when $|\mathcal{S}|$ is big enough. This gives the proof.
\end{proof}

Next, we will compute the conditional entropy $\mathsf{H}(\mathbf{S}|\mathbf{V}_{\mathcal{B}})$. Denote by
\begin{equation}\label{Eq: definiton F of CHSS}
            \mathcal{F}=\{\widetilde{f}_m(x)\in \mathbb{F}_p[x]: ~\mathrm{the~conditions~(I)~to~(III)~}\mathrm{are~satisfied}.\}
       \end{equation}
For any given $s(x)\in \mathcal{S}$, we will compute how many possible $\widetilde{f}_m(x)\in \mathcal{F}$ such that $s(x)=(\widetilde{f}_m(x)+\sum_{\ell=1}^{m-1} s_{\ell}(x))\pmod{m_0(x)}$ in Lemma \ref{le: preimage}. After that, we will determine the cardinality of $\mathcal{F}$ in Lemma \ref{le: cardinality of F}. Based on Lemmas \ref{le: loss entropy}, \ref{le: preimage} and \ref{le: cardinality of F}, we determine the conditional entropy $\mathsf{H}(\mathbf{S}|\mathbf{V}^\prime_{\mathcal{B}})$ and give the proof of the asymptotic perfectness of our scheme in Theorem \ref{Th: our perfectness}.

\begin{lemma}\label{le: preimage}
 Let $\Phi$ be the mapping
 \[\Phi: \mathcal{F}\mapsto \mathcal{S}, \widetilde{f}_m(x)\mapsto (\widetilde{f}_m(x)+\sum_{\ell=1}^{m-1} s_{\ell}(x))\pmod{m_0(x)}.\]\
 For any $s(x)\in \mathcal{S}$, let
        \begin{displaymath}
        \begin{aligned}
            \Phi^{-1}(s(x))=&\{\widetilde{f}_m(x) \in \mathcal{F}: s(x)=(\widetilde{f}_m(x)+\sum_{\ell=1}^{m-1} s_{\ell}(x))\pmod{m_0(x)}\}.
        \end{aligned}
        \end{displaymath}
The cardinality of the set $\Phi^{-1}(s(x))$ is $| \Phi^{-1}(s(x))|=p^{\theta}$, where
            \[\theta=\sum\limits_{i=1}^{t_m}d_i-\sum\limits_{i\in \mathcal{B}}d_i-d_0 \geq 0.\]
\end{lemma}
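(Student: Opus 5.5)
The plan is to recast $\Phi^{-1}(s(x))$ as the solution set of one system of polynomial congruences with pairwise coprime moduli, and then count the solutions of bounded degree using Lemma \ref{le: CRT}.

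\textbf{Step 1: write $\Phi^{-1}(s(x))$ as a congruence system.} Since $|\mathcal{B}^{(\ell)}|\geq t_\ell$ for $\ell\in[m-1]$, the adversaries know $s_1(x),\ldots,s_{m-1}(x)$ by condition (I). So the requirement $s(x)=(\widetilde{f}_m(x)+\sum_{\ell=1}^{m-1}s_\ell(x))\pmod{m_0(x)}$ is the single congruence
\[\widetilde{f}_m(x)\equiv s(x)-\sum_{\ell=1}^{m-1}s_\ell(x)\pmod{m_0(x)}.\]
The right-hand side is a known polynomial of degree $<d_0$. Because $\mathcal{B}^{(m)}=\mathcal{B}$, conditions (II) and (III) together impose
\[\widetilde{f}_m(x)\equiv g_i(x)\pmod{m_i(x)}\quad\text{for every } i\in\mathcal{B},\]
where each $g_i(x)$ is known. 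Here $g_i(x)=H_m(c_i(x))+u_i^{(m)}(x)$ if $i\in[N_{m-1}]$, and $g_i(x)=c_i(x)$ otherwise. Condition (I) adds the degree constraint $\deg(\widetilde{f}_m(x))<D$, where $D:=\sum_{i=1}^{t_m}d_i$. Hence $\Phi^{-1}(s(x))$ is exactly the set of polynomials of degree $<D$ satisfying these $|\mathcal{B}|+1$ congruences.

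\textbf{Step 2: show $\theta\ge 0$.} This is the one place where the hypotheses of the construction are used. In the worst case $|\mathcal{B}|=|\mathcal{B}^{(m)}|\le t_m-1$. Since $d_1\le\cdots\le d_n$, we get $\sum_{i\in\mathcal{B}}d_i\le\sum_{i=n-t_m+2}^{n}d_i$. Condition (iii) of the Identities with $\ell=m$ then gives
\[d_0+\sum_{i\in\mathcal{B}}d_i\le\sum_{i=1}^{t_m}d_i,\]
that is, $\theta\ge0$.

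\textbf{Step 3: count the solutions.} By condition (i), the moduli $m_0(x)$ and $m_i(x)$, $i\in\mathcal{B}$, are pairwise coprime. Their product $M_{\mathcal{B}}'(x)=x^{d_0}\prod_{i\in\mathcal{B}}m_i(x)$ has degree $d_0+\sum_{i\in\mathcal{B}}d_i=D-\theta$.

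By Lemma \ref{le: CRT}, the system has a unique solution $y_0(x)$ with $\deg(y_0(x))<D-\theta$. Every solution has the form $y_0(x)+k(x)M_{\mathcal{B}}'(x)$. Such a solution has degree $<D$ if and only if $\deg(k(x))<\theta$, taking $k=0$ as allowed. So the solutions of degree $<D$ are in bijection with the polynomials $k(x)\in\mathbb{F}_p[x]$ of degree $<\theta$, of which there are exactly $p^{\theta}$. Therefore $|\Phi^{-1}(s(x))|=p^\theta$ for every $s(x)\in\mathcal{S}$.

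\textbf{Main obstacle.} The delicate point is Step 1, namely checking that (I)–(III) impose nothing beyond these congruences and the degree bound. Condition (IV) has been deliberately excluded from the definition of $\mathcal{F}$, and the values $H_m(c_i(x))$ for $i\in\mathcal{B}$ are fixed data known to $\mathcal{B}$. Consequently every residue pattern is consistent, and the CRT count applies uniformly in $s(x)$.
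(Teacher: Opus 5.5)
Your proof is correct and takes the same route as the paper. You apply the polynomial CRT to the pairwise coprime moduli $m_0(x)$ and $m_i(x)$, $i\in\mathcal{B}$, write every solution as the unique low-degree solution plus $k(x)$ times the product modulus, and count the $p^{\theta}$ choices of $k(x)$ with $\deg k(x)<\theta$. Your Step 2 also helpfully proves $\theta\ge 0$, using condition (iii) with $\ell=m$, $|\mathcal{B}|\le t_m-1$ and $d_1\le\cdots\le d_n$; the paper only asserts this.
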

\begin{proof}
For any $s(x)\in \mathcal{S}$, and $\widetilde{f}(x)\in \Phi^{-1}(s(x))$, it holds that
  \begin{displaymath}
         \left\{\begin{aligned}
            &\widetilde{f}_m(x) \equiv (s(x)-\sum_{\ell=1}^{m-1} s_{\ell}(x)) \pmod{m_0(x)},\\
            &\widetilde{f}_m(x) \equiv  c_i^{(m)}(x) \pmod{m_i(x)}~\mathrm{for~all}~i\in \mathcal{B},\\
           \end{aligned} \right.
  \end{displaymath}
where
\[c_i^{(m)}(x)=\begin{cases}
            H_{m}(c_i(x))+u_i^{(m)}(x),~\mathrm{if}~i\in \mathcal{B}\cap [N_{m-1}],\\
            c_i(x),~\mathrm{if}~ i\in \mathcal{B}\cap [N_{m-1}+1,N_m].\\
         \end{cases}\]
Based on Lemma \ref{le: CRT}, we have
\[\widetilde{f}_m(x)\equiv\sum\limits_{i\in \mathcal{\widetilde{B}}}\lambda_{i,\mathcal{\widetilde{B}}}(x)M_{i,\mathcal{\widetilde{B}}}(x)c_i^{(m)}(x)\pmod {M_{\mathcal{\widetilde{B}}}(x)},\]
where $\mathcal{\widetilde{B}}=\{0\}\cup \mathcal{B}$, $c_0^{(m)}(x)=s(x)-\sum_{\ell=1}^{m-1} s_{\ell}(x)$,
$M_{\mathcal{\widetilde{B}}}(x)=\prod\limits_{i\in\mathcal{\widetilde{B}}}m_i(x)$, $M_{i,\mathcal{\widetilde{B}}}(x)=M_{\mathcal{\widetilde{B}}}(x)/m_i(x)$, and $\lambda_{i,\mathcal{\widetilde{B}}}(x)\equiv M_{i,\mathcal{\widetilde{B}}}^{-1}(x)\pmod {m_i(x)}$.

Denote by $\widetilde{f}_{m,\mathcal{\widetilde{B}}}(x)=\sum\limits_{i\in \mathcal{\widetilde{B}}}\lambda_{i,\mathcal{\widetilde{B}}}(x)M_{i,\mathcal{\widetilde{B}}}(x)c_i^{(\ell)}(x)\pmod {M_{\mathcal{\widetilde{B}}}(x)}$, then
    \[\deg(\widetilde{f}_{m,\mathcal{\widetilde{B}}}(x))<\deg(M_{\mathcal{\widetilde{B}}}(x))=\sum\limits_{i\in \mathcal{\widetilde{B}}}d_i,\]
and
      \begin{equation}\label{Eq: preimage F}
     \widetilde{f}_{m}(x)\equiv \widetilde{f}_{m,\mathcal{\widetilde{B}}}(x)\pmod{M_{\mathcal{\widetilde{B}}}(x)}.
     \end{equation}
Consequently, there exists $k_{\mathcal{\widetilde{B}}}(x)\in \mathbb{F}_p[x]$ such that
     \[\widetilde{f}_{m}(x)=\widetilde{f}_{m,\mathcal{\widetilde{B}}}(x)+k_{\mathcal{\widetilde{B}}}(x) M_{\mathcal{\widetilde{B}}}(x).\]

On the one hand, $\widetilde{f}_{m}(x) \in \mathcal{F}$ and $\deg(\widetilde{f}_{m}(x))<\sum\limits_{i=1}^{t_{m}}d_i$ from the expression (\ref{Eq: definiton F of CHSS}). Since $\deg(\widetilde{f}_{m,\mathcal{\widetilde{B}}}(x))<\deg(M_{\mathcal{\widetilde{B}}}(x))$, then we obtain that
        \[\deg(k_{\mathcal{\widetilde{B}}}(x))<\sum\limits_{i=1}^{t_m}d_i-\sum\limits_{i\in \mathcal{\widetilde{B}}}d_i.\]
Therefore, there are $p^{\theta}$ different choices for $k_{\mathcal{\widetilde{B}}}(x)$, where
     \[\theta=\sum\limits_{i=1}^{t_m}d_i-\sum\limits_{i\in \mathcal{\widetilde{B}}}d_i=\sum\limits_{i=1}^{t_m}d_i-\sum\limits_{i\in \mathcal{B}}d_i-d_0 \geq 0.\]
On the other hand, each choice of polynomials
    \[k_{\mathcal{\widetilde{B}}}(x)\in \mathbb{F}_p[x], \deg(k_{\ell,\mathcal{\widetilde{B}}^{(\ell)}}(x))<\theta\]
produces a different $\widetilde{f}_{m}(x)\in \mathbb{F}_p[x]$ satisfying $\deg(f_{m}(x))<\sum\limits_{i=1}^{t_{m}}d_i$ and the expression (\ref{Eq: preimage F}), namely, $\widetilde{f}_{m}(x)\in \mathcal{F}$. Consequently, the cardinality of the set $\Phi^{-1}(s(x))$ is $| \Phi^{-1}(s(x))|=p^{\theta}$, where
            \[\theta=\sum\limits_{i=1}^{t_m}d_i-\sum\limits_{i\in \mathcal{B}}d_i-d_0 \geq 0.\]
\end{proof}

\begin{lemma}\label{le: cardinality of F}
 The cardinality of $\mathcal{F}$ is $|\mathcal{F}|=p^{\theta+d_0}$.
\end{lemma}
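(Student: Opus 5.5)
The count of $\mathcal{F}$ can be obtained in two ways, and I would give the first, using the second as a check. First, Lemma \ref{le: preimage} shows that every fibre of $\Phi$ has exactly $p^{\theta}$ elements. The sets $\Phi^{-1}(s(x))$, $s(x)\in\mathcal{S}$, partition $\mathcal{F}$, because $\Phi$ is a well-defined map on all of $\mathcal{F}$. Since $|\mathcal{S}|=p^{d_0}$, summing over the fibres gives
\[
|\mathcal{F}|=\sum_{s(x)\in\mathcal{S}}|\Phi^{-1}(s(x))|=p^{d_0}\cdot p^{\theta}=p^{\theta+d_0}.
\]
The only things to check are that each fibre has exactly $p^\theta$ elements, which is the content of Lemma \ref{le: preimage}, and that the fibres are disjoint and exhaust $\mathcal{F}$.

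For the independent check, count $\mathcal{F}$ directly from conditions (I) to (III). These say that $\deg(\widetilde{f}_m(x))<\sum_{i=1}^{t_m}d_i$ and that $\widetilde{f}_m(x)\equiv c_i^{(m)}(x)\pmod{m_i(x)}$ for all $i\in\mathcal{B}$, with no constraint modulo $m_0(x)$. By Lemma \ref{le: CRT} these congruences are equivalent to a single congruence $\widetilde{f}_m(x)\equiv \widetilde{f}_{m,\mathcal{B}}(x)\pmod{M_{\mathcal{B}}(x)}$, where $\deg(\widetilde{f}_{m,\mathcal{B}}(x))<\sum_{i\in\mathcal{B}}d_i$. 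Writing $\widetilde{f}_m(x)=\widetilde{f}_{m,\mathcal{B}}(x)+k(x)M_{\mathcal{B}}(x)$, the degree bound is equivalent to $\deg(k(x))<\sum_{i=1}^{t_m}d_i-\sum_{i\in\mathcal{B}}d_i=\theta+d_0$. Distinct $k(x)$ give distinct $\widetilde{f}_m(x)$, so again $|\mathcal{F}|=p^{\theta+d_0}$.

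The main obstacle is nonnegativity of the exponent, which both counts need so that the space of admissible $k(x)$ is correctly sized. We need $\sum_{i\in\mathcal{B}}d_i\le \sum_{i=1}^{t_m}d_i-d_0$. Since $|\mathcal{B}^{(m)}|\le t_m-1$ and the $d_i$ are nondecreasing, $\sum_{i\in\mathcal{B}}d_i\le\sum_{i=n-t_m+2}^{n}d_i$, and condition (iii) of the scheme with $\ell=m$ bounds this by $\sum_{i=1}^{t_m}d_i-d_0$. This is exactly the inequality $\theta\ge 0$ asserted in Lemma \ref{le: preimage}. I would make this explicit once, so that both the fibre count and the direct count are justified.
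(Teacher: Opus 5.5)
Your proposal is correct, and your main argument is the paper's own: partition $\mathcal{F}$ into the fibres of $\Phi$, each of size $p^{\theta}$ by Lemma \ref{le: preimage}, and multiply by $|\mathcal{S}|=p^{d_0}$. Your direct count via the CRT modulo $M_{\mathcal{B}}(x)$ and your derivation of $\theta\ge 0$ from condition (iii) with $\ell=m$ are both valid additions; the paper only asserts $\theta\ge 0$ and does not include the direct count.
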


\begin{proof}
From Lemma \ref{le: preimage}, it holds that for any $s(x)\in \mathcal{S}$, there are exactly $p^{\theta}$ polynomials $\widetilde{f}_m(x) \in \mathcal{F}$ such that
  \begin{displaymath}
         \left\{\begin{aligned}
            &\widetilde{f}_m(x) \equiv (s(x)-\sum_{\ell=1}^{m-1} s_{\ell}(x)) \pmod{m_0(x)},\\
            &\widetilde{f}_m(x) \equiv  c_i^{(m)}(x) \pmod{m_i(x)}~\mathrm{for~all}~i\in \mathcal{B},\\
           \end{aligned} \right.
  \end{displaymath}
where
\[c_i^{(m)}(x)=\begin{cases}
            H_{m}(c_i(x))+u_i^{(m)}(x),~\mathrm{if}~i\in \mathcal{B}\cap [N_{m-1}],\\
            c_i(x),~\mathrm{if}~ i\in \mathcal{B}\cap [N_{m-1}+1,N_m].\\
         \end{cases}\]
Moreover, different $s(x)\in \mathcal{S}$ corresponds to different $\widetilde{f}_m(x) \in \mathcal{F}$. Therefore, the cardinality of $\mathcal{F}$ is $|\mathcal{F}|=p^{d_0}\cdot p^{\theta}=p^{\theta+d_0}$.
\end{proof}

\begin{theorem}[Asymptotic perfectness]\label{Th: our perfectness}
   Our scheme is asymptotically perfect.
\end{theorem}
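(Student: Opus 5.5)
The plan is to chain the three preceding lemmas. We assume the secret is uniform on $\mathcal{S}$, so $\mathsf{H}(\mathbf{S})=\log_2|\mathcal{S}|=d_0\log_2 p$. We restrict to the worst-case unauthorized set $\mathcal{B}$ fixed above: $|\mathcal{B}^{(m)}|<t_m$ and $|\mathcal{B}^{(\ell)}|\geq t_\ell$ for $\ell\in[m-1]$. Every other $\mathcal{B}\notin\Gamma$ either misses some level $\ell<m$ or holds fewer constraints. In that case at least one $f_\ell$ is undetermined, and the same counting (applied to that level instead of level $m$) only increases the residual uncertainty. So bounding $\Delta(|\mathcal{S}|)$ for this case suffices. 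The first goal is to show that $\mathsf{H}(\mathbf{S}|\mathbf{V}^\prime_{\mathcal{B}})=\mathsf{H}(\mathbf{S})$ exactly, where $\mathbf{V}^\prime_{\mathcal{B}}$ records only conditions (I)--(III).

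For that computation I would argue as follows. Under $\mathcal{V}^\prime_{\mathcal{B}}$, the true $f_m(x)$ is uniformly distributed over $\mathcal{F}$. This holds because $f_m$ is uniform on polynomials of degree $<\sum_{i=1}^{t_m}d_i$, as noted before Lemma \ref{le: preimage}, and conditions (I)--(III) are exactly the congruence constraints defining $\mathcal{F}$. The secret equals $\Phi(f_m(x))$. By Lemma \ref{le: preimage} every fiber of $\Phi$ has size $p^{\theta}$, and by Lemma \ref{le: cardinality of F} $|\mathcal{F}|=p^{\theta+d_0}$. Hence
\[\Pr[\mathbf{S}=s(x)\mid \mathbf{V}^\prime_{\mathcal{B}}]=\frac{p^{\theta}}{p^{\theta+d_0}}=p^{-d_0}\quad\text{for every } s(x)\in\mathcal{S},\]
so $\mathsf{H}(\mathbf{S}|\mathbf{V}^\prime_{\mathcal{B}})=d_0\log_2p=\mathsf{H}(\mathbf{S})$. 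The key ingredient is $\theta\geq 0$, which is where condition (iii) on the degrees is used. It guarantees $\sum_{i\in\mathcal{B}}d_i+d_0\leq\sum_{i=1}^{t_m}d_i$ whenever $|\mathcal{B}|\leq t_m-1$. The worst case is that $\mathcal{B}$ consists of the $t_m-1$ largest moduli, which matches the left side of (iii).

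Next, I would apply Lemma \ref{le: loss entropy}. Given $\epsilon_1>0$, it supplies $\sigma_1$ such that for $|\mathcal{S}|>\sigma_1$
\[\Delta(|\mathcal{S}|)=\mathsf{H}(\mathbf{S})-\mathsf{H}(\mathbf{S}|\mathbf{V}_{\mathcal{B}})=\mathsf{H}(\mathbf{S}|\mathbf{V}^\prime_{\mathcal{B}})-\mathsf{H}(\mathbf{S}|\mathbf{V}_{\mathcal{B}})<\epsilon_1 .\]
We also need $\mathsf{H}(\mathbf{S})\neq0$, which is immediate since $d_0\geq1$. Since $\mathcal{B}$ ranges over finitely many subsets, we take $\sigma_1$ to be the maximum over them, giving uniformity in $\mathcal{B}$. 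This establishes the definition of asymptotic perfectness.

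The main obstacle is justifying the uniform distribution of $f_m$ conditioned on $\mathcal{V}^\prime_{\mathcal{B}}$. The adversary's view also contains $f_1,\dots,f_{m-1}$ and the published $u_i^{(m)}$. I would argue that $s_1,\dots,s_{m-1}$ are chosen independently of $s_m$, and the $\alpha_\ell$ are independent. Therefore knowing $f_\ell$ for $\ell<m$ only fixes $s-s_m$ and leaves $s_m$, hence $f_m$, uniform. The $u_i^{(m)}$ for $i\in\mathcal{B}$ are just the CRT residues already encoded in (II). The $u_i^{(m)}$ for $i\notin\mathcal{B}$ are exactly what Lemma \ref{le: loss entropy} handles through condition (IV), so no separate treatment is needed here.
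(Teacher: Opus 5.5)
Your proposal is correct and follows the paper's proof essentially step for step. In the worst-case configuration $|\mathcal{B}^{(m)}|<t_m$, $|\mathcal{B}^{(\ell)}|\geq t_\ell$ for $\ell\in[m-1]$, you combine Lemmas \ref{le: preimage} and \ref{le: cardinality of F} to get $\mathsf{Pr}(\mathbf{S}=s(x)\mid \mathbf{V}^\prime_{\mathcal{B}})=p^{-d_0}$, hence $\mathsf{H}(\mathbf{S}|\mathbf{V}^\prime_{\mathcal{B}})=\mathsf{H}(\mathbf{S})$, and then invoke Lemma \ref{le: loss entropy} to absorb the effect of condition (IV). Your extra remarks make explicit what the paper leaves implicit: the uniformity of $f_m(x)$ given $\mathbf{V}^\prime_{\mathcal{B}}$, $\theta\geq 0$ from degree condition (iii), the reduction of other unauthorized sets to this case, and a $\sigma_1$ uniform over the finitely many $\mathcal{B}$. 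As in the paper, the non-trivial content rests entirely on Lemma \ref{le: loss entropy}.
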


\begin{proof}
Recall that $\mathcal{B}\subset[n]$, and for each $\ell\in [m]$ we define $\mathcal{B}^{(\ell)} = \mathcal{B} \cap (\bigcup_{w=1}^{\ell} \mathcal{P}_{w})$. Consider the worst-case scenario in which
        \[|\mathcal{B}^{(m)}|< t_{m},~\mathrm{and}~|\mathcal{B}^{(\ell)}|\geq t_{\ell}~\mathrm{for~all}~\ell\in [m-1].\]
For any $s(x)\in \mathcal{S}$, Lemmas \ref{le: preimage} and \ref{le: cardinality of F} give the conditional probability
     \[\mathsf{Pr}(\mathbf{S}=s(x)|\mathbf{V}^\prime_\mathcal{B}=\mathcal{V}^\prime_{\mathcal{B}})=\frac{|\Phi^{-1}(s(x))|}{|\mathcal{F}|}=\frac{p^{\theta}}{p^{\theta+d_0}}=\frac{1}{p^{d_0}}.\]
Therefore,
\begin{displaymath}
             \begin{aligned}
              &\mathsf{H}(\textbf{S}|\textbf{V}^\prime_\mathcal{B})\\
                    =&-\sum_{\mbox{\tiny$\begin{array}{c}
                                       \mathcal{B}\subset[n],|\mathcal{B}^{(m)}|< t_{m},\\
                                      |\mathcal{B}^{(\ell)}|\geq t_{\ell}, \ell\in [m-1]\\
                                       \end{array}$}}
                      \sum_{s(x)\in \mathcal{S}}\mathsf{Pr}(\textbf{V}^\prime_\mathcal{B}=\mathcal{V}^\prime_{\mathcal{B}})
                      \mathsf{Pr}(\mathbf{S}=s(x)|\textbf{V}^\prime_\mathcal{B}=\mathcal{V}^\prime_{\mathcal{B}})\log_2 \mathsf{Pr}(\mathbf{S}=s(x)|\textbf{V}^\prime_\mathcal{B}=\mathcal{V}^\prime_{\mathcal{B}})\\
                    =&\sum_{\mbox{\tiny$\begin{array}{c}
                                       \mathcal{B}\subset[n],|\mathcal{B}^{(m)}|< t_{m},\\
                                      |\mathcal{B}^{(\ell)}|\geq t_{\ell}, \ell\in [m-1]\\
                                       \end{array}$}}
                      \sum_{s(x)\in \mathcal{S}}\mathsf{Pr}(\textbf{V}^\prime_\mathcal{B}=\mathcal{V}^\prime_{\mathcal{B}})\frac{1}{p^{d_0}}\log_2 p^{d_0}\\
                     =&\sum_{\mbox{\tiny$\begin{array}{c}
                                       \mathcal{B}\subset[n],|\mathcal{B}^{(m)}|< t_{m},\\
                                      |\mathcal{B}^{(\ell)}|\geq t_{\ell}, \ell\in [m-1]\\
                                       \end{array}$}}
                      \mathsf{Pr}(\textbf{V}^\prime_\mathcal{B}=\mathcal{V}^\prime_{\mathcal{B}})\log_2 p^{d_0}\\
                    =&\log_2 p^{d_0}.
            \end{aligned}
          \end{displaymath}
 Note that the secret is chosen uniformly at random, so $\mathsf{H}(\textbf{S})=\log_2|\mathcal{S}|=\log_2 p^{d_0}= \mathsf{H}(\mathbf{S}|\mathbf{V}^\prime_\mathcal{B})$. By Lemma \ref{le: loss entropy}, for every $\epsilon_1>0$ there exists a positive integer  $\sigma_1$ such that for all $|\mathcal{S}|=p^{d_0}>\sigma_1$, it holds that
                               \[0<\mathsf{H}(\mathbf{S})-\mathsf{H}(\mathbf{S}|\mathbf{V}_{\mathcal{B}})
                               =\mathsf{H}(\mathbf{S}|\mathbf{V}'_{\mathcal{B}})-\mathsf{H}(\mathbf{S}|\mathbf{V}_{\mathcal{B}})<\epsilon_1.\]
In view of Definition \ref{def: Asymptotically Ideal CHSS}, we conclude that the scheme is asymptotically perfect.
\end{proof}

\begin{theorem}\label{Theorem: summary}
Our scheme is an asymptotically ideal CHSS scheme with an information one when $d_0=d_1=\cdots=d_n$.
\end{theorem}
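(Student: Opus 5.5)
The statement combines two parts of Definition \ref{def: Asymptotically Ideal CHSS}: asymptotic perfectness and asymptotic maximum information rate. Correctness is Theorem \ref{Theorem:Correctness of our scheme}, and asymptotic perfectness is Theorem \ref{Th: our perfectness}. So the plan is to invoke these two theorems and then show that, when $d_0=d_1=\cdots=d_n$, every share has entropy exactly $\mathsf{H}(\mathbf{S})$. That gives information rate $\rho=1$ in the sense of Definition \ref{def: Information rate}, and in particular the ratio bound $\le 1+\epsilon_2$ holds for every $\epsilon_2>0$ with any $\sigma_2$.

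First I would check that the hypotheses of the construction hold in the equal-degree case. Condition (ii) is trivial. Condition (iii) reads $d_0+(t_\ell-1)d_0\le t_\ell d_0$, which holds with equality. Next, the secret is uniform on $\mathcal{S}$ with $|\mathcal{S}|=p^{d_0}$, so $\mathsf{H}(\mathbf{S})=d_0\log_2 p$.

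I would then bound the shares, treating two cases.
\begin{itemize}
\item For $i\in[N_{m-1}]$, the share $c_i(x)=r_i(x)$ is a uniformly random polynomial of degree $<d_i=d_0$. Hence $\mathsf{H}(\mathbf{C}_i)=d_0\log_2 p$.
\item For $i\in[N_{m-1}+1,N_m]$, the share $c_i(x)=f_m(x)\bmod m_i(x)$ lies in the set of polynomials of degree $<d_i=d_0$. This set has $p^{d_0}$ elements, so $\mathsf{H}(\mathbf{C}_i)\le d_0\log_2 p$.
\end{itemize}
For the second case one can in fact get equality. By the remark before Theorem \ref{Th: our perfectness}, $f_m$ is uniform over polynomials of degree $<\sum_{i=1}^{t_m}d_i$, and $\sum_{i=1}^{t_m}d_i\ge d_i$. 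The reduction map modulo $m_i(x)$ is $\mathbb{F}_p$-linear and surjective from that space, so each residue has the same number of preimages and $c_i(x)$ is uniform. Either way, $\max_i\mathsf{H}(\mathbf{C}_i)=d_0\log_2 p=\mathsf{H}(\mathbf{S})$.

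It follows that $\rho=\mathsf{H}(\mathbf{S})/\max_i\mathsf{H}(\mathbf{C}_i)=1$, and the ratio in Definition \ref{def: Asymptotically Ideal CHSS} equals $1\le 1+\epsilon_2$. Combined with Theorem \ref{Th: our perfectness}, this gives asymptotic idealness.

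The main subtle point is whether the public values $u_i^{(\ell)}$ should be counted as part of a participant's share. By the paper's definition they are public and not part of $\mathcal{C}_i$, so I would state this convention explicitly. The rest is routine entropy counting.
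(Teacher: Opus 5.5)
Your proposal is correct and follows the same route as the paper. Like the paper, you take correctness and asymptotic perfectness from Theorem \ref{Theorem:Correctness of our scheme} and Theorem \ref{Th: our perfectness}, then note that when $d_0=d_1=\cdots=d_n$ every share lies in a space of polynomials of degree $<d_0$, the same size as $\mathcal{S}$, so $\rho=1$. Your extra checks make explicit what the paper's proof leaves implicit: that condition (iii) holds with equality, that $f_m(x) \bmod m_i(x)$ is uniform because reduction is a surjective linear map, and that the public values $u_i^{(\ell)}(x)$ are not counted as part of any share.
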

\begin{proof}
According to Definition \ref{Th: our perfectness}, Theorem \ref{Theorem:Correctness of our scheme} and Theorem \ref{Th: our perfectness}, we get that our scheme is a secure and asymptotically perfect CHSS scheme. Next, we will prove that our scheme has an information one when $d_0=d_1=\cdots=d_n$. Each share of our scheme is given by
         \[c_i(x)=\begin{cases}
            r_i(x),\mathrm{if}~i\in [N_{m-1}],\\
            f_m(x) \pmod {m_i(x)},\mathrm{if}~i\in [N_{m-1}+1,N_m].\\
         \end{cases}\]
Recall that $r_{i}(x)\in \mathbb{F}_{p}[x], i\in [N_{m-1}]$ are random polynomials such that $\deg(r_{i}(x))\leq d_i-1$. The polynomial $f_{m}(x)=s_{m}(x)+\alpha_{\ell}(x)x^{d_0}\in \mathbb{F}_{p}[x]$ is random when $s_{m}(x)\in \mathcal{S}$ and $\alpha_{m}(x)\in \mathcal{G}_{m}$ are chosen randomly. It is easy to check that $\deg(f_{m}(x)\pmod {m_i(x)}<d_i$.
 When $d_0=d_1=\cdots=d_n$, every share space is the same as the secret space $\mathcal{S}=\{g(x)\in \mathbb{F}_p[x]:\deg(g(x))<d_0\}$, which implies that the information rate is $\rho=1$. According to Definition \ref{def: Asymptotically Ideal CHSS}, our scheme is an asymptotically ideal DHSS scheme with an information one when $d_0=d_1=\cdots=d_n$.
\end{proof}

\section{Conclusion}\label{Sec: conclusion}
By using the CRT for polynomial ring and one-way functions, we propose an asymptotically ideal conjunctive hierarchical secret sharing scheme in this work. It has computational security, and permits flexible share sizes. How to reduce the number of public values is our future work.

\section*{Author Contributions}
CRediT: Jian Ding: Conceptualization, Funding acquisition, Writing-original draft; Cheng Wang: Conceptualization, Writing-original draft; Hongju Li: Conceptualization, Validation; Cheng Shu: Validation, Writing-review \& editing; Haifeng Yu: Validation, Writing-review \& editing

\section*{Disclosure Statement }
There are no relevant financial or non-financial competing interests to report.

\end{document}